\newcommand{\Ball}{\mathbb{B}}
\newcommand{\ZZ}{\mathbb{Z}}
\renewcommand{\Re}{\mathbb{R}}
\newcommand{\dom}{{\rm dom} \, }
\newcommand{\tcb}{\textcolor{black}}
\newcommand{\norm}[1]{\left\lVert#1\right\rVert}
\newtheorem{assume}{Assumption}
\newtheorem{thm}{Theorem}
\newtheorem{prop}{Proposition}
\newtheorem{cor}{Corollary}
\newtheorem{lemm}{Lemma}
\newtheorem*{pr2}{Proof of Theorem 2}
\title{\LARGE \bf
\tcb{Analyzing the Effect of Persistent Asset Switches on a Class of Hybrid-Inspired Optimization Algorithms}}
\author{Matina Baradaran, Justin H. Le, Andrew R. Teel
\thanks{This work is supported by AFOSR grant FA9550-18-1-0246. Matina Baradaran, Justin H. Le, and Andrew R. Teel are with the Department of Electrical and Computer Engineering, University of California, Santa Barbara, CA 93106, USA. Email: 
        {\tt\small \{baradaranhosseini, justinle, teel\}@ucsb.edu}}
}
\begin{document}
\maketitle
\thispagestyle{empty}
\pagestyle{empty}
\begin{abstract}
Convex optimization challenges are currently pervasive in many science and engineering domains. In many applications of convex optimization, such as those involving multi-agent systems and resource allocation, the objective function can persistently switch during the execution of an optimization algorithm. Motivated by such applications, we analyze the effect of persistently switching objectives in continuous-time optimization algorithms. In particular, we take advantage of existing robust stability results for switched systems with distinct equilibria and extend these results to systems described by differential inclusions, making the results applicable to recent optimization algorithms that employ differential inclusions for improving efficiency and/or robustness. Within the framework of hybrid systems theory, we provide an accurate characterization, in terms of Omega-limit sets, of the set to which the optimization dynamics converge. Finally, by considering the switching signal to be constrained in its average dwell time, we establish semi-global practical asymptotic stability of these sets with respect to the dwell-time parameter.
\end{abstract}

\section{INTRODUCTION}

Convex optimization challenges are pervasive across many current science and technology fields. Such optimization problems are often solved using iterative algorithms such as first-order gradient-based methods that can be naturally represented and analyzed as dynamical systems. However, most studies of these algorithms do not account for applications in which the objective function to be optimized can instantaneously \tcb{change} at discrete moments in time during the algorithm's execution. Switching objectives arise in increasingly many real-world applications, such as multi-agent systems in which the agents must be replaced according to a real-time mission constraint, as well as resource allocation problems in which the allocated assets can experience persistent changes in the reward they generate. Other examples can be found in branches of human science, such as sociology, psychology, and organization science, which study the group interactions and performance of teams with multiple individuals engaged in a common task \cite{anderson2014dynamic}, \cite{march1991exploration}. In \cite{appraisalnetwork}, for instance, an algorithm is proposed and studied for optimizing the sum of the team members' performance measures, and each member's performance measure is determined by variables such as the member's specific skill level. Thus, the objective will switch whenever a team member is replaced during execution of the algorithm. 
Persistent switches of assets could also play an important role in the realm of intelligent control of unmanned aerial vehicles (UAVs), also referred to as drones. In recent years, there has been a surge in the use of UAVs for surveillance and security, parcel shipment, traffic monitoring, disaster recovery, and military reconnaissance \cite{menouar2017uav},\cite{zhou2015multi}. Cooperative and collaborative optimization of UAV performance is essential in such applications. In \cite{bekmezci2013flying}, applications and engineering constraints for UAV-based mobile ad hoc networking are surveyed. In all these settings, there are few studies of how optimization dynamics are impacted by objectives that persistently switch due to failures or replacements of UAVs in a network/relay. In this paper, we analyze how the presence of a persistently switching objective impacts the asymptotic stability properties of optimization dynamics. Our analysis aims to be applicable to systems such as the aforementioned UAV ad hoc networks, where the UAVs that suffer from low battery, potential damages, or other disabling aspects may be replaced during execution of optimization dynamics. Our analysis targets applications where a system must instantaneously replace assets (UAVs, team members, etc.) at discrete moments in time, while it optimizes performance continuously in time. To prevent instability and be able to characterize the set to which the optimization dynamics converge, such switches should satisfy an average dwell-time constraint. In \cite{baradaranIFAC2020}, stability is studied for systems involving switches between multiple differential equations with distinct equilibria, with switches satisfying an average dwell-time condition. We extend the results from \cite{baradaranIFAC2020} and apply the provided asymptotic characterization method to consider switching between differential inclusions with distinct equilibria. The differential inclusions each take the form of the Hybrid-inspired Heavy Ball System from \cite{JustinLe2020}, which takes advantage of the differential inclusion to achieve efficiency comparable to accelerated gradient methods while also retaining certain robustness properties. We characterize the set to which the resulting switched system converges, in terms of the Omega-limit set of an associated ideal hybrid system. This ideal hybrid system involves an automaton with solutions that are in one-to-one correspondence with time domains satisfying the average dwell-time constraint and with the rate parameter set to zero. Finally, we show that the system switching between differential inclusions with a small disturbance is a perturbed version of the mentioned ideal hybrid system with a globally asymptotically stable Omega-limit set. Thus, we can establish semi-global, practical asymptotic stability for the perturbed system. \tcb{The robust stability of switched systems with multiple equilibria is a well studied subject (see \cite{alpcan2010stability},\cite{veer2017generation}). For example, \cite{GianlucaJorge2020} studies the steady-state optimization of switched systems with time-varying cost functions.}
We demonstrate our stability results on an application involving a data relay formed by UAVs, where the  objective models various performance measures such as the communication quality or battery consumption. \vspace{-3pt}
\section{Notations}
In this work, $\Re^n$ is used to demonstrate the $n$-dimensional Euclidean space. The $\Re_{\geq 0}$ is used to show the nonnegative real numbers. We use $|x|$ to denote the Euclidean norm of the verctor $x\in \Re^n$.  Given $r>0$, we use $r \Ball$ for the set $\{x \in \Re^n: |x|  \leq r \}$. For a function $\alpha$, we say $\alpha \in \mathcal{K}^{+}$ if $\alpha: \Re_{\geq 0} \rightarrow \Re_{\geq 0}$ is continuous and strictly increasing. 
\section{Hybrid Systems}
The hybrid systems framework that we use is described in \cite{Goebel12a}. Formally, hybrid systems are modeled as
 \begin{subequations}
 \label{eq:1}
 \begin{align}
     x \in C , & \qquad \ \dot{x} \in F(x) \\
     x \in D , & \qquad x^{+} \in G(x),
 \end{align}
 \end{subequations}
 where $x \in \Re^{n}$ is the state, $C \subset \Re^{n}$ is the flow set, $D \subset \Re^{n}$ is the jump set,
 $F:\Re^{n} \rightrightarrows \Re^{n}$ is the flow map, and $G:\Re^{n} \rightrightarrows \Re^{n}$ is the jump map.
 The data $(C,F,D,G)$ is said to satisfy the {\em hybrid basic conditions} if $C$ and $D$ are closed, the graphs of $F$ and $G$ are closed, $F$ and $G$ are locally bounded, the values of $F$ are nonempty and convex on $C$ and the values of $G$ are nonempty on $D$. We use $\mathcal{S}(K)$ to denote the {\em set of solutions} to (\ref{eq:1}) that start in $K$. 
\section{Preliminaries}
In the first part of this section, we introduce some notions later used in this work and state some stability concepts for hybrid systems. In the second part of this section, we review the concept of average dwell-time constraint \cite{HespanhaMorse99CDC}. Next, we consider a perturbed hybrid system, in which the perturbation is parameterized by a dwell-time parameter satisfying the average dwell-time condition from \cite{Cai2008smooth}. Finally, we discuss the advantages of optimization methods employing differential inclusions for applications with persistent asset switches, introducing an efficient differential inclusion-based optimization algorithm inspired by the hybrid algorithms studied in \cite{JustinLe2020}. This optimization algorithm has been shown in \cite{JustinLe2020} to have desirable robustness properties and will be shown in Section \ref{sec:numerical} to be especially suitable for online optimization problems with persistent asset switches. \vspace{-3pt}
\subsection{Stability concept for hybrid systems} 
We state some stability analysis concepts particularly for the hybrid system framework (\ref{eq:1}). The hybrid system (\ref{eq:1}) is said to be Lagrange stable if there exists $\alpha \in \mathcal{K}^{+}$ such that, for each $z_0 \in \Re^n$, each $x \in \mathcal{S}(x_0)$ and $(t,j) \in \dom(x)$, we have $|x(t,j)|\leq \alpha(|x_0|)$. A compact set $\mathcal{A} \subset \Re^{n}$ is said to be {\em stable} for the hybrid system (\ref{eq:1}) if,
for each $\varepsilon>0$, there exists $\delta>0$ such that $|x_{\circ}|_{\mathcal{A}} \leq \delta$, $x \in \mathcal{S}(x_{\circ})$ and $(t,j) \in \mbox{\rm dom}(x)$ imply that $|x(t,j)|_{\mathcal{A}} \leq \varepsilon$. A compact set $\mathcal{A} \subset \Re^{n}$ is said to be {\em attractive} for (\ref{eq:1}) if there exists $\delta >0$ such that each solution
$x \in \mathcal{S}(\mathcal{A}+\delta \Ball)$ is bounded and, if complete, satisfies $\lim_{t+j \rightarrow \infty} |x(t, j)|_{\mathcal{A}}=0$. The {\em basin of attraction} for an attractive set $\mathcal{A}$ is the set of initial conditions from which each solution is bounded and, if complete, satisfies $\lim_{t+j \rightarrow \infty} |x(t, j)|_{\mathcal{A}}=0$. A compact set $\mathcal{A} \subset \Re^{n}$ is said to be {\em asymptotically stable} for (\ref{eq:1}) if it is stable and attractive. It is said to be {\em globally asymptotically stable} for (\ref{eq:1}) if it is
asymptotically stable with $\Re^n$ as its basin of attraction. The set $\mathcal{A} \subset \Re^{n}$ is said to be {\em semiglobally practically asymptotically stable in the parameter $\delta>0$} for the perturbed hybrid system if there exists $\beta \in \mathcal{K}\mathcal{L}$ and, for each
$\varepsilon>0$ and $\Delta>0$, there exists $\delta>0$ such that each
$x \in \mathcal{S}_{\delta}(\mathcal{A} + \Delta \Ball)$ satisfies
$|x(t,j)|_{\mathcal{A}} \leq \beta(|x(0,0)|_{\mathcal{A}},t+j)+\tcb{\varepsilon}$ for all $(t,j) \in \mbox{\rm dom}(x)$.
\subsection{Average dwell time switching and its automaton} 
\label{prep}
Let a family of differential inclusions be given
\begin{align}
\label{fam}
    \dot{z} \in \overline{\mbox{co}} F_{\sigma}(z+ \delta \Ball) + \delta \Ball, \qquad \sigma \in \Sigma. \vspace{-20pt}
\end{align}
Let $\Sigma:=\{1,\cdots,M\}$ with $M$ being a positive integer. The switching signal is denoted by $\sigma: \Re_{\geq 0} \to \Sigma$ which satisfies the average dwell-time constraint parameterized by a small $\delta >0$. We formalize the concept of average dwell time for the switching signal $\sigma$. Let $N_{\sigma}[s,t]$ denote the number of switches $\sigma$ within the interval $[s,t]$ and $N_0$ being a positive integer. We then assume \vspace{-5pt}
\begin{align}
\label{eq:dwell}
N_{\sigma}(s,t) \leq \delta (t-s) + N_{0}   \qquad \forall \ 0 \leq s \leq t . 
\end{align} 
We cast a hybrid system, which associates the dynamics from differential inclusions given in (\ref{fam}) and hybrid time domains satisfying the constraint from (\ref{eq:dwell}). This hybrid system employs an automaton to capture the average dwell-time condition. The model of this hybrid system $\hat{\mathcal{H}}_\delta$ is given 
\vspace{-15pt} \begin{align}
\label{eq:Hdelta} 
\hat{\mathcal{H}}_{\delta} 
\begin{cases}
                             (z,\sigma, \tau) \in C \! \times  \! \Sigma  \! \times \!  [0,N_{0}]  \
                             \begin{cases}
                                   \dot{z} & \! \! \! \! \in  \overline{\mbox{co}} F_{\sigma}(z \! + \! \delta \Ball) \! + \! \delta \Ball \\
                                   \dot{\sigma}  &  \! \! \! \! = 0 \\
                                   \dot{\tau} & \! \! \! \! \in [0,\delta]
                             \end{cases} \\
                              (z,\sigma,\tau) \in C \! \times \! \Sigma \! \times \! [1,N_{0}] \
                              \begin{cases}
                              z^{+} & \! \! \! \! = z \\
                              \sigma^{+} & \! \! \! \! \in \Sigma \backslash \left\{ \sigma \right\} \\
                             \tau^{+}  & \! \! \! \! = \tau - 1 .
                             \end{cases}
\end{cases} 
\end{align}
 According to \cite[Proposition 1.1]{Cai2008smooth}, the solutions to $\hat{\mathcal{H}}_{\delta}$ are in a one-to-one correspondence with the solutions of (\ref{fam}) under the average dwell-time switching constraint (\ref{eq:dwell}) and $\hat{\mathcal{H}}_{\delta}$ satisfies the hybrid basic conditions from \cite[Assumption 6.5]{Goebel12a}. Note that, for convenience, the parameter $\delta >0$ describes both the maximum flow rate of the automaton $\tau$ and the perturbed differential inclusion. \vspace{-8pt}
 \subsection{Ideal system analysis}
 The goal is to extend the results from \cite{baradaranIFAC2020} to differential inclusions and characterize the asymptotic behavior of (\ref{fam}) under average dwell-time constraint parametrized by a small $\delta$. In order to realize this goal, we characterize the asymptotic behavior of the $\hat{\mathcal{H}}_\delta$ without the perturbation, which results from setting $\delta = 0$ in (\ref{eq:Hdelta}). This new ideal hybrid system obtained by setting $\delta = 0$ is denoted by $\hat{\mathcal{H}}_{0}$ and corresponds to
\begin{align}
\label{eq:H0} 
\hat{\mathcal{H}}_{0} 
\begin{cases}
                             (z,\sigma, \tau) \in C \! \times  \! \Sigma  \! \times \!  [0,N_{0}]  \
                             \begin{cases}
                                   \dot{z} & \! \! \! \! \in F_{\sigma}(z)  \\
                                   \dot{\sigma}  & \! \! \! \! = 0 \\
                                   \dot{\tau} & \! \! \! \! = 0
                             \end{cases} \\
                              (z,\sigma,\tau) \in C \! \times \! \Sigma \! \times \! [1,N_{0}] \
                              \begin{cases}
                              z^{+} & \! \! \! \!  = z \\
                              \sigma^{+} & \! \! \! \! \in \Sigma \backslash \left\{ \sigma \right\} \\
                             \tau^{+}  &  \! \! \! \! =  \tau - 1 .
                             \end{cases}
                           \end{cases}
\end{align}
As we will see, the asymptotic behavior of (\ref{eq:H0}) approximates the asymptotic behavior of the (\ref{eq:Hdelta}). 
\subsection{A differential inclusion-based optimization algorithm}
\label{meth}
Differential inclusions have been useful for studying stability of steepest descent/ascent dynamics in convex optimization \cite{goebel2019glimpse}. Differential inclusions are also discussed in \cite{franca2018nonsmooth}, in which a continuous-time analogue of the Alternating Direction Method-of-Multipliers is given. In \cite{Adri20201} and \cite{Adri20202}, differential inclusions are used to approximate a high-gain anti-windup strategy for handling input constraints in feedback-based optimization. In this section, we focus on an algorithm, represented as a differential inclusion, whose trajectories seek a solution to the \tcb{problem} \vspace{-5pt}
\begin{subequations}
\label{eq:problem_unswitched}
\begin{align}
    &\min_{q \in \Re^n} \left\{\phi(q) \coloneqq \sum_{i=1}^n \label{1a} 
      \phi_{i}(q_i)\right\} \\
    &\text{s.t.} \quad \mathds{1}^{T} q = d, \quad d \in \Re, \label{eq:balance_constraint}
\end{align}
\end{subequations}
under the following assumptions. 
\begin{assume}
\label{objective_assumptions}
\tcb{The objective} $\phi: \Re^n \to \Re$ is continuously differentiable, has compact sub-level sets, has an $L$-Lipschitz gradient $\nabla \phi$, and is convex.
\end{assume}
Under Assumption \ref{objective_assumptions}, \cite[Prop. 5.3.7]{bertsekas-2009} implies that the set $\mathcal{Q}^*$ of solutions to \eqref{eq:problem_unswitched} is non-empty and compact. Furthermore, \cite[Prop. 5.3.3]{bertsekas-2009} implies that $q^* \in \mathcal{Q}^*$ if and only if there exists $\mu^* \in \Re$ such that
\begin{align}
\label{eq:kkt_original}
    \nabla\phi(q^*) + \mu^* \mathds{1} = 0, \quad \mathds{1}^T q^* = d.
\end{align}
Defining $\mathcal{L}$ as the Laplacian of a connected undirected graph, we have that the nullspace of $\mathcal{L}$ is $\text{span}(\mathds{1})$ \cite[Sec. II]{olfati-saber-2007}, and thus, the conditions \eqref{eq:kkt_original} can be equivalently expressed as
\begin{align}
\label{eq:kkt}
    \mathcal{L}\nabla\phi(q^*) = 0, \quad \mathds{1}^T q^* = d.
\end{align}
\tcb{It is convenient} to express the above conditions in terms of $\mathcal{L}$ because $\mathcal{L}$ will play an important role in the analysis and application of our proposed algorithm. Our algorithm is based on the following differential equation, which we refer to as the \textit{Laplacian-Gradient Heavy Ball Method} (HBM) and is a continuous-time analogue of the algorithm in \cite{ghadimi2013multi}. The HBM is given by \vspace{-8pt} 
\begin{align}
    \dot x &= \left[\begin{array}{cc}
        p \\
        -Kp - \mathcal{L}\nabla\phi(q)
    \end{array}\right].  \label{eq:hbm}
\end{align}
To achieve fast convergence without oscillations, first-order convex optimizations methods often require knowledge of problem parameters for a precise algorithmic parameter tuning. Theoretically, convergence of iterative optimization algorithms for convex problems can be achieved through momentum in the sense of Nesterov's method \cite{Nesterov_method}. However, often the algorithmic parameters such as momentum and stepsize should be accurately specified depending on the problem parameters. When such precise parameters are not known, momentum methods such as Nesterov’s method suffer from oscillations in their trajectories. Such oscillations exhibited by Nesterov's method often restrict its application to real-world systems \cite{hauswirth-2020}. 
Therefore, we consider the differential inclusion, based on HBM and inspired by the hybrid algorithm in \cite{Teel2019}, which can greatly reduce oscillations without precise algorithmic parameter tuning. This algorithm is referred to as the \textit{Laplacian-Gradient Hybrid-inspired Heavy Ball Method} (HiHBM). This approach using a differential inclusion can also improve transient performance in settings where the objective is persistently switching during the algorithm execution, as we show in Sec.~\ref{sec:numerical}. In this system, the state is denoted as $x \coloneqq (q, p)$. The parameters $\{\underline{K}, \overline{K}\} \in \Re^2$ satisfy $0 < \underline{K} \leq \overline{K}$. The system is defined as \vspace{-11pt}
\begin{subequations}
\label{eq:hihbm}
\begin{align}
    \dot x &\in F(x) \coloneqq \left[\begin{array}{cc}
        p \\
        -\kappa(x)p - \mathcal{L}\nabla\phi(q)
    \end{array}\right], \label{eq:hihbm_flow_map} \\
    \kappa(x) &\coloneqq \kappa(x; \underline{K}, \overline{K}) \nonumber \\
    &\coloneqq \begin{dcases}
        \overline{K} & \text{if } \langle\mathcal{L}\nabla\phi(q), p\rangle > 0, \\
        \underline{K} & \text{if } \langle\mathcal{L}\nabla\phi(q), p\rangle < 0, \\
        {\left[\underline{K}, \overline{K}\right]} & \text{if } \langle\mathcal{L}\nabla\phi(q), p\rangle = 0,
    \end{dcases} \label{eq:switch} \\
    C &\coloneqq \{(q, p) \in \Re^{2n}: \; \mathds{1}^Tq = d \text{ \& } \mathds{1}^Tp = 0\}.  \label{eq:Cflowmap}
\end{align}
\end{subequations} 
To establish the following global asymptotic stability property of HiHBM, it will be convenient to write $\mathcal{F}_d$ to denote the feasible set defined by \eqref{eq:balance_constraint}. That is, we have
\begin{align}
    \mathcal{F}_d &\coloneqq \{q \in \mathbb R^n: \mathds{1}^Tq = d\}, \label{eq:feasible_set}
\end{align}
and the set defined in \eqref{eq:Cflowmap} can be written $C = \mathcal{F}_d \times \mathcal{F}_0$.

\begin{thm}
\label{thm:gas_hihbm}
Under Assumption~\eqref{objective_assumptions}, the set $\mathcal{A} \coloneqq \mathcal{Q}^* \times \{0\}$ is GAS for the system \eqref{eq:hihbm}.
\end{thm}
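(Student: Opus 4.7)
The plan is to prove GAS of $\mathcal{A}$ by exhibiting a single Lyapunov function on the forward-invariant flow set $C$ that strictly decreases along every selection of the set-valued map $F$, and then invoking a standard LaSalle-type invariance principle for differential inclusions. First I would verify forward invariance of $C$: on $C$ one has $\mathds{1}^T \dot q = \mathds{1}^T p = 0$, and $\mathds{1}^T \dot p = -\kappa\, \mathds{1}^T p - \mathds{1}^T \mathcal{L} \nabla\phi(q) = 0$ because $\mathds{1}^T \mathcal{L} = 0$. Together with continuity of $\nabla\phi$ and convex-valuedness of $\kappa(x) \subseteq [\underline{K}, \overline{K}]$, this also confirms that the hybrid basic conditions are met, so standard well-posedness of solutions applies.

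Next I would introduce the Lyapunov candidate
\begin{align*}
V(x) \coloneqq \phi(q) - \phi^* + \tfrac{1}{2}\, p^T \mathcal{L}^\dagger p,
\end{align*}
where $\phi^* \coloneqq \min_{q \in \mathcal{F}_d} \phi(q)$ and $\mathcal{L}^\dagger$ is the Moore--Penrose pseudoinverse of $\mathcal{L}$. Since $\mathcal{L}^\dagger$ is symmetric positive semidefinite with range $\mathcal{F}_0$, on which it is positive definite, and since $\phi$ has compact sub-level sets (Assumption~\ref{objective_assumptions}) and attains its minimum over $\mathcal{F}_d$ precisely on $\mathcal{Q}^*$ by \eqref{eq:kkt}, the function $V$ is continuous, nonnegative on $C$, vanishes exactly on $\mathcal{A}$, and has compact sub-level sets relative to $C$.

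The heart of the argument is the Lyapunov decrease. For any $v = (p,\, -\kappa p - \mathcal{L}\nabla\phi(q)) \in F(x)$ with $\kappa \in [\underline{K}, \overline{K}]$,
\begin{align*}
\langle \nabla V(x),\, v\rangle = \langle \nabla\phi(q),\, p\rangle - \kappa\, p^T \mathcal{L}^\dagger p - p^T \mathcal{L}^\dagger \mathcal{L}\, \nabla\phi(q).
\end{align*}
Because $\mathcal{L}^\dagger \mathcal{L}$ is the orthogonal projector onto $\mathcal{F}_0 = \text{range}(\mathcal{L})$ and $p \in \mathcal{F}_0$, the first and third terms cancel, leaving $\langle \nabla V(x), v\rangle = -\kappa\, p^T \mathcal{L}^\dagger p \le 0$ uniformly over every selection $v \in F(x)$; in particular this dispenses with the set-valuedness of $\kappa$.

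To conclude, stability of $\mathcal{A}$ is immediate from the Lyapunov theorem, and every solution is bounded because sub-level sets of $V$ are compact and $\dot V \le 0$. A LaSalle-type invariance principle for differential inclusions (e.g.\ \cite[Thm.~8.2]{Goebel12a}) then constrains each $\omega$-limit set to lie in the largest weakly invariant subset of $\{x \in C : p = 0\}$. On that set, weak invariance forces $\dot p = -\mathcal{L}\nabla\phi(q) = 0$, so $q \in \mathcal{Q}^*$ by \eqref{eq:kkt}, identifying this largest invariant subset with $\mathcal{A}$ itself and yielding global attractivity. The main obstacle, and the key design choice, is the $\mathcal{L}^\dagger$-weighting of $p$ in $V$: it is exactly what cancels the cross terms produced by the $\mathcal{L}$-modulated gradient inside the flow map and what makes the decrease estimate hold uniformly over all admissible values of $\kappa$.
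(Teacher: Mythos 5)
Your proposal is correct and follows essentially the same route as the paper: the same Lyapunov function $V(q,p) = \phi(q) - \phi^* + \tfrac{1}{2}p^T\mathcal{L}^\dagger p$, the same cancellation of the cross terms (you phrase it via the projector $\mathcal{L}^\dagger\mathcal{L}$ restricted to $\mathcal{F}_0$, the paper via $\mathcal{L}\mathcal{L}^\dagger = I - \tfrac{1}{n}\mathds{1}\mathds{1}^T$ together with $\mathds{1}^Tp = 0$, which is the same computation), and the same LaSalle-type invariance argument identifying the largest invariant subset of $\{p=0\}$ with $\mathcal{A}$. The only difference is cosmetic: you additionally spell out forward invariance of $C$, which the paper leaves implicit.
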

\begin{proof}
Let $\mathcal{L}^{\dagger}$ denote the generalized inverse Laplacian \cite{Gutman04}, which can be shown to be a symmetric positive semi-definite matrix of rank $n-1$ satisfying \vspace{-5pt}
\begin{align}
    \label{eq:inverse_laplacian}
    \mathcal{L}^{\dagger}\mathds{1} = 0, \qquad \mathcal{L}\mathcal{L}^{\dagger} = I_n - \frac{1}{n}\mathds{1}\mathds{1}^T.
\end{align}
We aim to apply \cite[Thm. 8.2]{Goebel12a}. Toward this goal, consider
\begin{align}
    V(q, p) &\coloneqq \phi\left(q\right) - \phi^* + \frac{1}{2}p^T\mathcal{L}^{\dagger}p, \\
    \phi^* &\coloneqq \min_{w \in \mathcal{F}_d} \phi(w). \label{eq:phi^*}
\end{align}
The map $q \mapsto \phi(q) - \phi^*$ is positive definite on $C$ with respect to $\mathcal{Q}^*$ because $q \in \mathcal{F}_d$ for all points in $C$. The map $p \mapsto p^T\mathcal{L}^{\dagger}p$ is positive definite on $C$ with respect to $\{0\}$, which follows from \eqref{eq:inverse_laplacian} and the fact that $p \in \mathcal{F}_0$ for all points in $C$. By Assumption~\eqref{objective_assumptions}, $V$ is radially unbounded with respect to $\mathcal{A}$ relative to $C$. The Lie derivative of $V$ with respect to \eqref{eq:hihbm_flow_map} satisfies \vspace{-7pt}
\begin{align*}
    &\langle\nabla\phi(q), \; p\rangle - \left\langle\mathcal{L}\nabla\phi(q), \; \mathcal{L}^{\dagger} p\right\rangle - \kappa(x)p^T \mathcal{L}^{\dagger} p \nonumber \\
    &= p^T\left(\frac{1}{n}\mathds{1}\mathds{1}^T\right) \nabla\phi(q) - \kappa(x)p^T \mathcal{L}^{\dagger} p \nonumber = - \kappa(x)p^T \mathcal{L}^{\dagger} p \leq 0, \vspace{-15pt}
\end{align*} 
where the first equality follows from \eqref{eq:inverse_laplacian}, and the second equality follows from the fact that $\mathds{1}^Tp = 0$. Stability of $\mathcal{A}$ has now been shown. To show attractivity, first recall that $p \mapsto p^T\mathcal{L}^{\dagger}p$ is positive definite when restricted to $\mathcal{F}_0$, and thus, $\kappa(x)p^T \mathcal{L}^{\dagger} p$ can remain at zero for all time only if $p$ remains at zero. However, if $p$ remains at zero, then \eqref{eq:hihbm_flow_map} implies that both $\dot{q}$ and $\mathcal{L}\nabla\phi(q)$ must remain at zero, which can happen only if $q$ remains in $\mathcal{Q}^*$, due to \eqref{eq:kkt}. In summary, $\kappa(x)p^T \mathcal{L}^{\dagger} p$ remains at zero only if $(q, p)$ remains in $\mathcal{A}$, and the result then follows from \cite[Thm. 2.11]{Ryan98}.
\end{proof} 
\section{Online optimization with persistent switches}
\label{sec:drones} 
In this section, we discuss an optimization problem that models a team of drones collectively executing a task. The drones are the assets that persistently switch in this scenario, due to each drone's limited battery life, potential physical damages, or other disabling aspects. Consider a team of $n$ drones with the collective task of forming a relay to transmit data from a source to a destination. The drones form a straight path of length $d \in \Re_{>0}$ from the first to $n$-th drone. The network-wide state vector is denoted by $q \in \Re^n$, where $q_i$ is the relative distance from drone $i$ to the node that precedes it, while the data source is considered to precede the first drone. We assume that the drones $1$ to $n$ never cross over their neighboring drones, and their ordering remains the same during their movement. 
In practice, enforcement of a box constraint on each $q_i$ is needed, which can be done using the penalty-based approach in \cite{cherukuri-2015}, but the details are beyond the scope of this work. The problem can be modeled as a variant of the problem \eqref{eq:problem_unswitched}, in which the objective switches according to a switching signal $\sigma: \Re_{\geq 0} \to \Sigma$:
\begin{subequations}
\label{optss}
\begin{align}
    &\min_{q \in \Re^n} \left\{\phi_{\sigma(t)}(q) \coloneqq \sum_{i=1}^n
    \phi_{{i},{\sigma(t)}}(q_i)\right\} \\
    &\text{s.t.} \quad \mathds{1}^{T} q = d.
\end{align}
\end{subequations}
In a similar setting to (\ref{fam}), the switching signal satisfies the average dwell-time constraint parametrized by a small $\delta$.
\begin{assume}
\label{new_objective_assumptions}
For each constant $\sigma(.) \in \Sigma$, the function $q \mapsto \phi_{\sigma}(q)$ satisfies the conditions of Assumption \ref{objective_assumptions} and there exists a unique solution $q^*$ to \eqref{optss}.
\end{assume}
\begin{thm}
\label{thm1}
\tcb{
If Assumptions \ref{objective_assumptions} and \ref{new_objective_assumptions} hold then the $\Omega$-limit set associated with the ideal hybrid system
$\hat{\mathcal{H}}_0$ given in (\ref{eq:H0}) with $z = (q, p)$ and with \vspace{-9pt}
\begin{align}
 F_{\sigma}(z) &\coloneqq \left[\begin{array}{cc}
        p \\
        -\kappa(z)p - \mathcal{L}\nabla\phi_{\sigma}(q)
    \end{array}\right], \nonumber \\
    &\coloneqq \begin{dcases}
        \overline{K} & \text{if } \langle\mathcal{L}\nabla\phi_{\sigma}(q), p\rangle > 0, \\
        \underline{K} & \text{if } \langle\mathcal{L}\nabla\phi_{\sigma}(q), p\rangle < 0, \\
        {\left[\underline{K}, \overline{K}\right]} & \text{if } \langle\mathcal{L}\nabla\phi_{\sigma}(q), p\rangle = 0,
    \end{dcases}  \\
    C &\coloneqq \{(q, p) \in \Re^{2n}: \; \mathds{1}^Tq = d \text{ \& } \mathds{1}^Tp = 0\}, 
\end{align}
is semiglobally, practically asymptotically stable in the parameter $\delta >0$ for the perturbed system $\hat{\mathcal{H}}_\delta$ given in (\ref{eq:Hdelta}).}
   \begin{figure}
    \centering
    \includegraphics[scale=0.32]{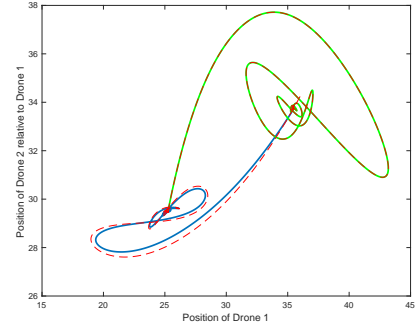}
    \caption{Steady State Behavior near Omega-limit set}
    \label{fig:omega}
       \vspace{-10pt}
\end{figure}
\begin{figure}
     \centering
     \begin{subfigure}[b]{0.35\textwidth}
         \centering
         \includegraphics[width=\linewidth]{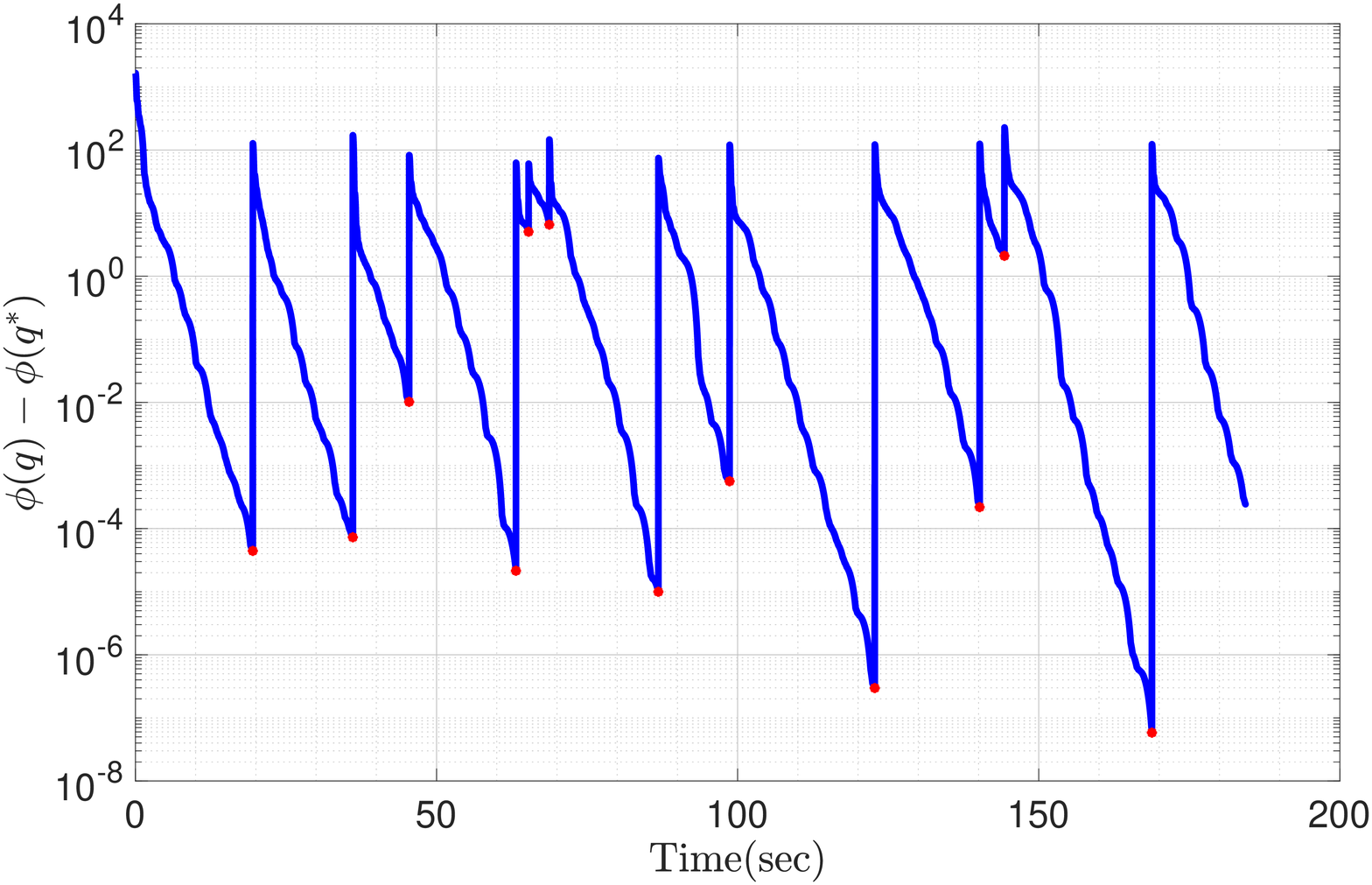}
         \caption{Optimization with persistent switches}         
         \label{fig:phi}
     \end{subfigure}
     \hfill
     \begin{subfigure}[b]{0.35\textwidth}
         \centering
         \includegraphics[width=\linewidth]{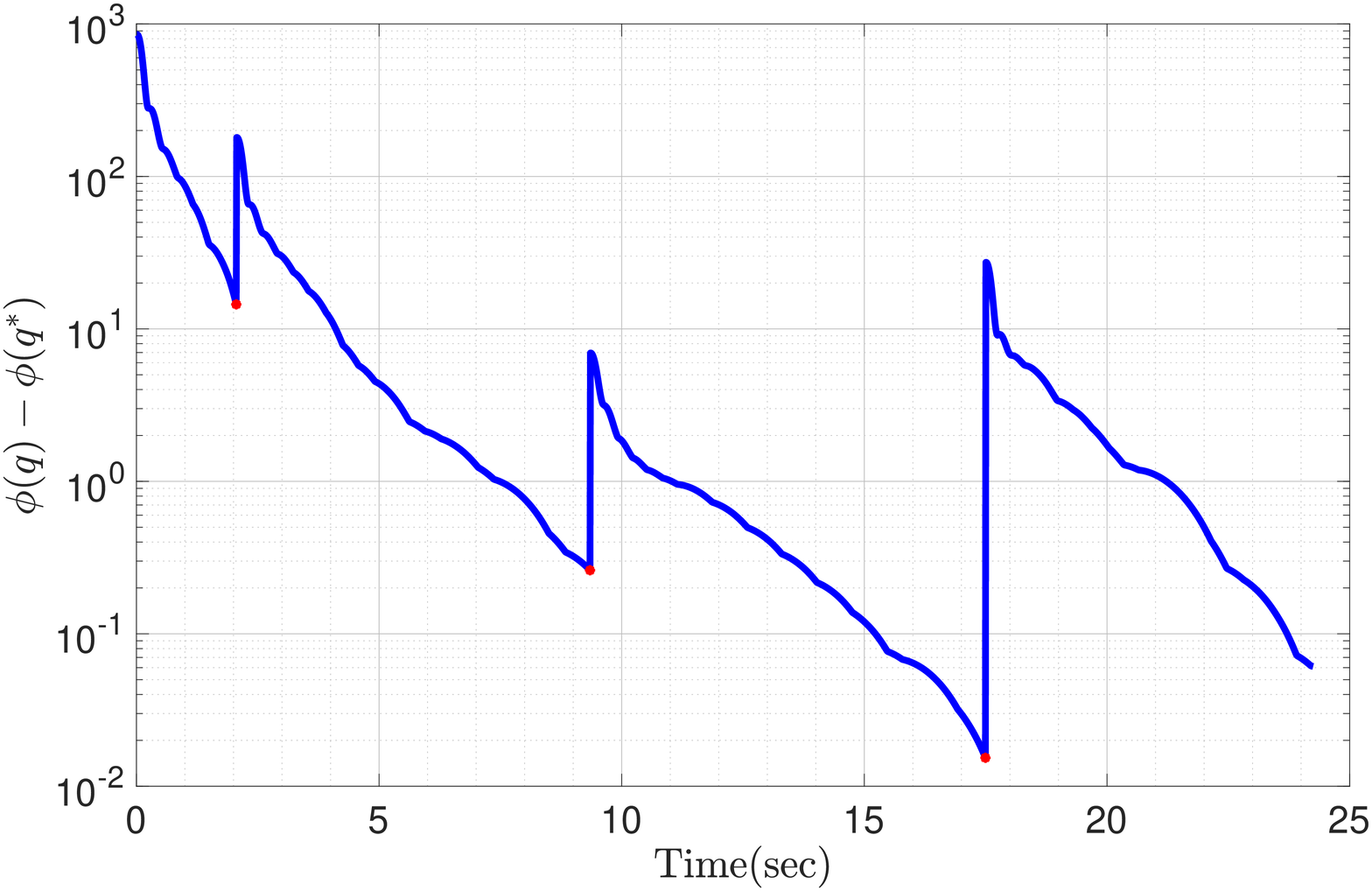}
         \caption{Optimization with fewer asset switches}                 
         \label{fig:phi2b}
     \end{subfigure}     
        \caption{}
        \label{fig:20drones}
               \vspace{-18pt}
\end{figure}
\end{thm}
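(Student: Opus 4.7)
The plan is to adapt the three-step program of \cite{baradaranIFAC2020} to the present differential-inclusion setting: first, verify that the ideal system $\hat{\mathcal{H}}_0$ satisfies the hybrid basic conditions; second, show that its $\Omega$-limit set $\Omega$ is a compact set that is globally asymptotically stable for $\hat{\mathcal{H}}_0$; and third, invoke a robust-stability theorem for well-posed hybrid systems to lift this GAS property to semiglobal practical asymptotic stability in the parameter $\delta$ for the perturbed system $\hat{\mathcal{H}}_\delta$.

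For the well-posedness step, the sets $C \times \Sigma \times [0,N_0]$ and $C \times \Sigma \times [1,N_0]$ are closed, and each mode-specific flow map $F_\sigma$ is outer semicontinuous, locally bounded, and nonempty-convex-valued on $C$ by the same argument used in the proof of Theorem~\ref{thm:gas_hihbm} with $\phi$ replaced by $\phi_\sigma$ (the set-valued selector $\kappa$ providing the Krasovskii-type convexification on the switching surface). The jump map is outer semicontinuous with nonempty values on the jump set. For the GAS step, the key structural observation is that in $\hat{\mathcal{H}}_0$ one has $\dot{\tau}=0$ and $\tau^+=\tau-1$, while a jump requires $\tau \in [1,N_0]$; hence every solution undergoes at most $N_0$ jumps, after which the $(q,p)$-dynamics reduce to a single-mode HiHBM to which Theorem~\ref{thm:gas_hihbm} applies, yielding convergence of $(q,p)$ to $\mathcal{Q}^*_\sigma \times \{0\}$ for the terminal mode $\sigma$. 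Using the per-mode Lyapunov functions $V_\sigma(q,p) = \phi_\sigma(q) - \phi^*_\sigma + \tfrac{1}{2} p^T \mathcal{L}^{\dagger} p$ (non-increasing along the flow of mode $\sigma$, with $(q,p)$ preserved across jumps), together with finiteness of $\Sigma$ and the compact sub-level sets granted by Assumption~\ref{new_objective_assumptions}, one obtains Lagrange stability and hence a compact $\Omega$-limit set of the form
\[
\Omega \;\subseteq\; \bigcup_{\sigma \in \Sigma} \bigl(\mathcal{Q}^*_\sigma \times \{0\} \times \{\sigma\} \times [0,N_0]\bigr),
\]
which is strongly forward invariant and globally attractive; stability follows from a cascaded estimate across at most $N_0$ mode transitions.

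With $\Omega$ GAS for the well-posed nominal system, the final step recognizes that $\hat{\mathcal{H}}_\delta$ is precisely the $\delta$-inflation of $\hat{\mathcal{H}}_0$ --- the $\delta\Ball$ terms inflate state argument and flow map, while $\dot{\tau}\in[0,\delta]$ inflates the automaton dynamics --- so the robustness principle \cite[Thm.~7.21]{Goebel12a} (see also \cite{Cai2008smooth}) directly yields semiglobal practical asymptotic stability of $\Omega$ in the parameter $\delta$ for $\hat{\mathcal{H}}_\delta$. The main obstacle will be the stability portion of the GAS step: attractivity is a clean consequence of the $\tau$-counter argument combined with Theorem~\ref{thm:gas_hihbm}, but establishing stability near $\Omega$ requires uniform control over the composition of up to $N_0$ mode switches --- note that $V_{\sigma'}$ can spike above $V_\sigma$ at the instant $\sigma$ changes to $\sigma'$ --- which forces one to bound the compositions using continuity of the finite family $\{V_\sigma\}_{\sigma\in\Sigma}$ and the fact that jumps do not displace $(q,p)$. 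This is precisely the step where the extension from differential equations in \cite{baradaranIFAC2020} to the present set-valued $F_\sigma$ must be re-verified, and is the technical heart of the argument.
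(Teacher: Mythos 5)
Your overall architecture (well-posedness of $\hat{\mathcal{H}}_0$, global asymptotic stability of a compact set for the ideal system, then a $\delta$-inflation robustness argument) matches the paper's route at a high level, and your verification of regularity and per-mode GAS via Theorem~\ref{thm:gas_hihbm} is exactly the content of Theorem~\ref{thm:hihbm_assu}. However, there is a genuine gap at the heart of your GAS step: the inclusion $\Omega \subseteq \bigcup_{\sigma}\bigl(\mathcal{Q}^*_\sigma \times \{0\} \times \{\sigma\} \times [0,N_0]\bigr)$ is false in general, and that union is neither forward invariant nor stable for $\hat{\mathcal{H}}_0$. Although each individual solution performs at most $N_0$ jumps and then converges to the equilibrium of its terminal mode, the $\Omega$-limit set is a limit over sequences of solutions and times: a solution may flow arbitrarily long in mode $\sigma_1$, arrive arbitrarily close to $z^*_{\sigma_1}$, and only then switch to mode $\sigma_2$, after which it traverses a fixed-length arc from $z^*_{\sigma_1}$ toward $z^*_{\sigma_2}$. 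Every point on such post-switch arcs belongs to the $\Omega$-limit set, and the excursions they produce do not shrink as the initial condition approaches the union of equilibria, so no ``cascaded estimate'' can establish stability of that union. This is precisely why the paper defines $S_\sigma$ in \eqref{eq:setSS} as the intersection over $j$ of closures of reachable sets $R_0$ from shrinking neighborhoods of $z^*_\sigma$, proves $\Omega_0(K)=S$ (Proposition~\ref{prop11}), and then shows that $S$ --- not the union of equilibria --- is the semiglobally practically asymptotically stable set (Theorem~\ref{thmmain}).

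A second, related omission: identifying the $\Omega$-limit set with $S$ requires the lower inclusion $S \subseteq \Omega_0(K)$ (Lemma~\ref{lem2}), which in turn needs Assumption~\ref{assume:3}, a backward-uniqueness property: the only solution of $\dot z \in -F_\sigma(z)$ on $C$ starting at $z^*_\sigma$ is the constant one. The paper verifies this for HiHBM in Theorem~\ref{thm:hihbm_assumptions} via a local Lipschitz bound and a comparison argument. Your proposal never addresses this, yet without it the theorem's conclusion --- that the $\Omega$-limit set itself, rather than some larger compact superset of it, is semiglobally practically asymptotically stable for $\hat{\mathcal{H}}_\delta$ --- does not follow.
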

  \begin{figure}
    \centering
    \includegraphics[scale=0.20]{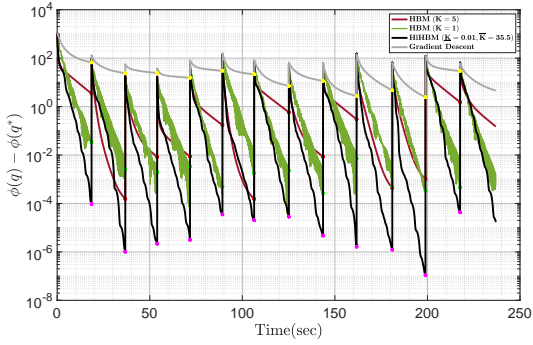}
    \caption{Comparison of HBM (\ref{eq:hbm}) and HiHBM (\ref{eq:hihbm})}
    \label{fig3}
       \vspace{-18pt}
\end{figure}
\vspace{-5pt} The proof of Theorem \ref{thm1} is given in Section \ref{proofsec}. \vspace{-5pt}
\section{Numerical results}\label{sec:numerical} 
In this section, we give numerical examples for the application and the \tcb{optimization} method discussed in Sections \ref{sec:drones} and \ref{meth} respectively. In Figure \ref{fig:omega}, we consider a team of $2$ drones in a relay of distance $d=100$, and we consider $N_0 = 1$ in order to demonstrate the $\Omega$-limit set. The blue line represents the position of the first drone (initial condition: distance of $35.5071$ units from the data source), and the green line represents the position of a second drone relative to the first drone in the relay (initial condition: distance of $33.7398$ units from the first drone). These two lines depict the $\Omega$-limit set of the ideal hybrid system given in (\ref{eq:H0}). The red lines are created by allowing for the small dwell-time parameter $\delta = 0.0338$. The switches satisfy the average dwell-time condition. Considering the switching behavior to be a small disturbance as in Theorem \ref{thm1}, we observe that the solutions converge to a small neighborhood of the $\Omega$-limit set of the ideal hybrid system. \tcb{These observations agree with the results from Theorem \ref{thm1}}. Figure \ref{fig:phi} shows the decrease in the value of the objective function under the dynamics of the HiHBM algorithm applied to the problem (\ref{optss}) for $20$ drones in a relay of length $d= 100$, with the network-wide objective being $\phi_{\sigma}(q) = \frac{1}{2} q^T P_{\sigma} q + b_{\sigma}^Tq$ for $\sigma \in \{1, 2\}$. Each $P_{\sigma}$ is diagonal with eigenvalues i.i.d. uniform on $[10, 20]$, and each $b_{\sigma}$ has its entries i.i.d. uniform on $[-10, 10]$. The objective at each drone is $\phi_{i,\sigma}(q_{i}) = \frac{1}{2}P_{\sigma, ii}q_i^2 + b_{\sigma, i}q_i$, where $q_i$ is the distance from drone $i$ to drone $i - 1$, as described in Sec.~\ref{sec:drones}. We allow persistent switches satisfying average dwell time with $\delta = 0.06$. Figures \ref{fig:phi} shows that HiHBM converges efficiently even as the objective switches persistently. Figure \ref{fig:phi2b} shows the same example with fewer switches. In Figure \ref{fig3}, the gray line displays the gradient descent optimization. The red line demonstrates the convergence of the HBM optimization algorithm with $K=5$ and the green line demonstrates the same optimization algorithm with $K=1$. The black line in Figure \ref{fig3} shows the HiHBM optimization algorithm with a faster, more efficient convergence. The lower and upper-bound for HiHBM are $0.01$ and $35.5$ respectively. In Figure \ref{fig3}, we see the efficiency of the HiHBM algorithm that generates smaller errors in comparison with the simple gradient descent and the HBM method. 
\section{The proof of theorem \ref{thm1}}
\label{proofsec}
\subsection{A new result on switching between constrained inclusions}
\label{sec:spas} Motivated by applications requiring persistent asset switches while optimizing, in this subsection we provide results on characterization of the asymptotic behavior that results from persistent switches among asymptotically stable differential inclusions with distinct equilibria. The switching signal is considered to satisfy an average dwell-time constraint as mentioned in the preliminaries \ref{prep}.
\subsubsection{Assumptions} We have two assumptions regarding the following family of differential inclusions parametrized by $\sigma \in \Sigma$ given as
\vspace{-12pt}
\begin{align}
\label{eq:inclu}
\dot{z} \in F_{\sigma}(z) \quad z\in C.
\vspace{-10pt}
\end{align}
\begin{assume}
\label{assume:2}
For each $\sigma \in \Sigma$, $F_{\sigma}$ is outer semi-continuous, locally bounded relative to $C \subset \dom F_q$, and, for each $z$, $F_{\sigma}(z)$ is non-empty and convex for all values of $z \in C$. Furthermore, the point $z_{\sigma}^*$ is globally asymptotically stable for (\ref{eq:inclu}).
\end{assume}
\begin{assume}
\label{assume:3}
For each $\sigma \in \Sigma$, the only solution to 
\begin{align}
\label{sup}
\bigg\{ z \in C,  \quad \dot{z} \in -F_{\sigma}(z)
\end{align} \vspace{-4pt}
with an initial value of $z(0) = z^*_{\sigma}$ is $z(t) = z^*_{\sigma}$ for all $t \geq 0$.
\end{assume}
Assumption \ref{assume:3} extends \cite[Assumption 2]{baradaranIFAC2020} from differential equations to differential inclusions. \vspace{-4pt}
\subsection{Extended Main Result}
In this section, we adapt the results from \cite{{baradaranIFAC2020}} to the hybrid system $\hat{\mathcal{H}}_{\delta}$. 
First, we establish boundedness for solutions to $\hat{\mathcal{H}}_{0}$ by claiming the following proposition and corollary. 
\begin{prop}
\label{prop1}
If, for each $\sigma \in \Sigma$, the system (\ref{eq:inclu}) is Lagrange stable, the hybrid system $\hat{\mathcal{H}}_{0}$ is Lagrange stable. 
\end{prop}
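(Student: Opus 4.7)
The plan is to exploit the structural property of the automaton in $\hat{\mathcal{H}}_0$ that caps the number of jumps at $N_0$, so that boundedness propagates through only finitely many flow pieces. The components $\sigma$ and $\tau$ live in bounded sets by construction, so only the $z$-component can threaten Lagrange stability; during flows $\sigma$ is constant and jumps leave $z$ unchanged, so the problem reduces to bounding $z$ across a sequence of flow intervals under Lagrange-stable dynamics.

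First I would observe the jump-count bound. For any solution $x = (z,\sigma,\tau)\in \mathcal{S}(x_0)$ with $x_0=(z_0,\sigma_0,\tau_0)$, the jump rule $\tau^+ = \tau - 1$ together with the jump-set requirement $\tau\in[1,N_0]$ and the flow rule $\dot\tau = 0$ force $\tau(t,j) = \tau_0 - j$, so $j\leq \lfloor \tau_0 \rfloor \leq N_0$ for all $(t,j)\in \dom(x)$. Moreover $|\sigma(t,j)|\leq M$ and $|\tau(t,j)|\leq N_0$ throughout, and jumps satisfy $z^+=z$, so within each flow interval $z$ evolves under $\dot{z}\in F_{\sigma(t,j)}(z)$ for a fixed mode.

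Next I would package the per-mode Lagrange stability into a single $\mathcal{K}^+$ bound. The hypothesis gives, for each $\sigma\in\Sigma$, a function $\alpha_\sigma\in\mathcal{K}^+$ such that every solution of $\dot z\in F_\sigma(z)$ starting at $z(0)$ satisfies $|z(t)|\leq \alpha_\sigma(|z(0)|)$. Since $\Sigma=\{1,\ldots,M\}$ is finite, setting $\alpha(r):=\max_{\sigma\in\Sigma}\alpha_\sigma(r)$ produces $\alpha\in\mathcal{K}^+$ that dominates all $\alpha_\sigma$. Applied to a single flow piece of the solution to $\hat{\mathcal{H}}_0$, this yields $|z(t,j)|\leq \alpha(|z(t_j,j)|)$, where $t_j$ denotes the start of the $j$-th flow interval.

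Finally I would iterate across flow pieces. Since jumps preserve $z$ and there are at most $N_0$ of them, induction on $j$ gives $|z(t,j)|\leq \alpha^{j+1}(|z_0|)\leq \alpha^{N_0+1}(|z_0|)$ for every $(t,j)\in\dom(x)$, where $\alpha^{k}$ denotes $k$-fold self-composition (which remains in $\mathcal{K}^+$). Combining with $|\sigma|+|\tau|\leq M+N_0$ and $|z_0|\leq|x_0|$, the function
\[
\tilde\alpha(r)\;:=\;\alpha^{N_0+1}(r)+r+M+N_0
\]
lies in $\mathcal{K}^+$ and satisfies $|x(t,j)|\leq \tilde\alpha(|x_0|)$ for all $(t,j)\in\dom(x)$, which is exactly Lagrange stability of $\hat{\mathcal{H}}_0$. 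I do not expect a serious obstacle here: the only point requiring care is that $N_0$ is a universal integer bound on the jump count (independent of the solution), so the finite composition $\alpha^{N_0+1}$ is a valid $\mathcal{K}^+$ function that works uniformly in the initial condition.
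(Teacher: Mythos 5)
Your argument is correct and is essentially the one the paper delegates to the cited boundedness results: the $\tau$-automaton with $\dot\tau=0$ forces $\tau(t,j)=\tau_0-j$ and hence at most $N_0$ jumps, jumps preserve $z$, and the per-mode Lagrange-stability gains (maximized over the finite set $\Sigma$) are composed at most $N_0+1$ times. The only cosmetic point is that the step $\alpha^{j+1}(|z_0|)\le\alpha^{N_0+1}(|z_0|)$ needs $\alpha(r)\ge r$, which you can enforce without loss of generality by replacing $\alpha$ with $r\mapsto\max\{\alpha(r),r\}$, still a $\mathcal{K}^{+}$ function.
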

The proof of Proposition \ref{prop1} follows the boundedness results from \cite[Section 4.2]{baradaranIFAC2020} for the ideal hybrid system $\hat{\mathcal{H}}_{0}$.
As a consequence of Proposition \ref{prop1} we have the following corollary. 
\begin{cor}
\label{cor1}
If Assumption \ref{assume:2} holds then the hybrid system $\hat{\mathcal{H}}_{0}$ is Lagrange stable. 
\end{cor}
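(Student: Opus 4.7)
The plan is to derive the corollary directly from Proposition~\ref{prop1}, whose hypothesis only demands Lagrange stability of each member of the family (\ref{eq:inclu}). Consequently, I would begin by arguing that Assumption~\ref{assume:2} implies, for every $\sigma \in \Sigma$, that solutions to $\dot{z} \in F_\sigma(z)$, $z \in C$, admit a uniform bound of class $\mathcal{K}^+$ in the initial condition.

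To that end, Assumption~\ref{assume:2} provides, for each $\sigma$, global asymptotic stability of the equilibrium $z^*_\sigma$ for the differential inclusion. Under the regularity posited there (outer semicontinuity, local boundedness, nonempty convex values), this GAS property is equivalent to the existence of a class-$\mathcal{K}\mathcal{L}$ estimate, in the spirit of the results in \cite{Goebel12a}: every solution satisfies $|z(t)-z^*_\sigma| \le \beta_\sigma(|z_0-z^*_\sigma|,t)$. Evaluating at $t=0$ and invoking the triangle inequality yields $|z(t)| \le \beta_\sigma(|z_0|+|z^*_\sigma|,0)+|z^*_\sigma|$, which can be majorized by a class-$\mathcal{K}^+$ function $\alpha_\sigma$, for instance $\alpha_\sigma(r) := \beta_\sigma(r+|z^*_\sigma|,0)+|z^*_\sigma|+r$, where the extra $+r$ ensures strict monotonicity regardless of the behaviour of $\beta_\sigma(\cdot,0)$. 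Hence each subsystem in (\ref{eq:inclu}) is Lagrange stable.

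Since $\Sigma = \{1,\ldots,M\}$ is finite, the pointwise maximum $\alpha(r) := \max_{\sigma \in \Sigma} \alpha_\sigma(r)$ is again of class $\mathcal{K}^+$ and supplies a uniform bound on $|z|$ that holds in every mode. The automaton components $(\sigma,\tau)$ always reside in the compact set $\Sigma \times [0,N_0]$, so bounding $|z|$ is enough to bound the full state $(z,\sigma,\tau)$. With each subsystem now shown to be Lagrange stable, Proposition~\ref{prop1} applies directly to yield Lagrange stability of $\hat{\mathcal{H}}_0$. The only delicate step is the construction of $\alpha_\sigma$ from $\beta_\sigma$, which merely requires care in ensuring continuity and strict monotonicity; the rest of the proof is a short chain of implications and offers no substantive obstacle beyond invoking Proposition~\ref{prop1}.
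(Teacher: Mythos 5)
Your proposal is correct and matches the paper's intended argument: the paper gives no separate proof of Corollary~\ref{cor1}, presenting it simply as a consequence of Proposition~\ref{prop1} once one observes that global asymptotic stability of $z^*_\sigma$ under the regularity in Assumption~\ref{assume:2} yields Lagrange stability of each subsystem in (\ref{eq:inclu}). Your extra care in constructing the class-$\mathcal{K}^+$ bound from the $\mathcal{KL}$ estimate and taking the maximum over the finite set $\Sigma$ fills in exactly the details the paper leaves implicit.
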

Using the definition of $\Omega$-limit set from \cite{baradaranIFAC2020}, the rest of this section is devoted to the characterization of the $\Omega$-limit set of $\hat{\mathcal{H}}_{0}$ from a compact set $K \subset \Re^{n+2}$ denoted by $\Omega_0(K)$. We use the definition of reachable sets from \cite{baradaranIFAC2020} and demonstrate the reachable set from $K$ with $R_{0}(K)$. Following the setting from \cite{baradaranIFAC2020}, we define
\begin{subequations}
\label{eq:setSS}
\begin{align}
S_{\sigma} & := \bigcap_{j \in \ZZ_{\geq 0}} \overline{ R_{0}\left( \vphantom{\frac{1}{2}} \left( \vphantom{\frac{}{}} \left\{ z_{\sigma}^{*} \right\} + \frac{1}{j+1} \Ball \right) \times \left\{\sigma \right\} \times [0,N_{0}] \right)} \\
S  & : = \bigcup_{\sigma \in \Sigma} S_{\sigma} \label{eq:setS} .  
\end{align}
\end{subequations}
\begin{lemm}
\label{lem0}
Under Assumption \ref{assume:2} and as a result of Corollary \ref{cor1}, the set $S$ is compact. 
\end{lemm}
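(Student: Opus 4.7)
The plan is to reduce compactness of $S$ to compactness of each individual $S_\sigma$, using that $\Sigma = \{1,\dots,M\}$ is finite, so that $S$ is a finite union and inherits compactness. Thus the work is to show that every $S_\sigma$ is compact.

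First I would observe that each set $\overline{R_0\!\left(\left(\{z_\sigma^*\}+\tfrac{1}{j+1}\Ball\right)\times\{\sigma\}\times[0,N_0]\right)}$ is closed by construction, and the family is monotone decreasing in $j$ because the initial sets shrink as $j$ increases and the reachable-set operator $R_0(\cdot)$ is monotone in its argument. Consequently $S_\sigma$ is closed as an intersection of a nested family of closed sets, and is moreover contained in the $j=0$ element of this family.

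Next I would show that the $j=0$ element is bounded, which combined with closedness gives compactness; this is where Corollary \ref{cor1} enters. The initial set $K_\sigma := (\{z_\sigma^*\}+\Ball)\times\{\sigma\}\times[0,N_0]$ is bounded, say $|x_0|\le r$ for every $x_0 \in K_\sigma$. By Corollary \ref{cor1}, $\hat{\mathcal{H}}_0$ is Lagrange stable, so there exists $\alpha\in\mathcal{K}^+$ with $|x(t,j)|\le\alpha(|x_0|)\le\alpha(r)$ for every solution $x$ starting in $K_\sigma$ and every $(t,j)\in\dom x$. Therefore $R_0(K_\sigma)\subset \alpha(r)\Ball$, and its closure is compact. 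Since $S_\sigma$ is a closed subset of this compact set, $S_\sigma$ is compact, and a finite union yields that $S$ is compact.

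I do not anticipate any genuine obstacle: the only subtlety worth stating carefully is the passage from the pointwise Lagrange bound $|x(t,j)|\le\alpha(|x_0|)$ to the uniform bound on the reachable set, which only uses boundedness of $K_\sigma$ and monotonicity of $\alpha$. The rest of the argument is purely set-theoretic and invokes only the definitions from \eqref{eq:setSS} together with monotonicity of $R_0$ in the initial condition.
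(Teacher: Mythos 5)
Your proof is correct and follows exactly the route the paper intends: the paper omits an explicit proof of this lemma, but its phrasing (``as a result of Corollary \ref{cor1}'') signals precisely your argument --- Lagrange stability of $\hat{\mathcal{H}}_0$ bounds the reachable set from the bounded initial set, each $S_\sigma$ is then a closed subset of a compact set, and $S$ is a finite union of the compact sets $S_\sigma$. No gaps; your explicit handling of the passage from the pointwise bound $|x(t,j)|\leq\alpha(|x_0|)$ to the uniform bound $\alpha(r)$ is the only detail worth writing down, and you wrote it down.
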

\begin{prop}
\label{prop11}
Under Assumptions \ref{assume:2} and \ref{assume:3},
for each compact set $K$ with $\left( \bigcup_{\sigma \in \Sigma} \left\{ z_{\sigma}^{*} \right\} \times \left\{ \sigma \right\} \right) \times [0,N_{0}]$
in its interior, $\Omega_{0}(K) = S$.
\end{prop}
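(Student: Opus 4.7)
The plan is to establish both inclusions $\Omega_0(K) \subseteq S$ and $S \subseteq \Omega_0(K)$ separately, adapting the strategy from the switched differential-equation setting of \cite{baradaranIFAC2020} to accommodate differential inclusions. The common setup exploits the structural feature of $\hat{\mathcal{H}}_0$ that the automaton coordinate $\tau$ is non-increasing (constant during flow and decreasing by one at each jump), so every individual solution performs at most $N_0$ jumps and then flows indefinitely under some fixed $F_\sigma$, for which Corollary \ref{cor1} supplies boundedness and Assumption \ref{assume:2} yields convergence of the $z$-component to $z_\sigma^*$.

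For $\Omega_0(K) \subseteq S$, I would take $\xi \in \Omega_0(K)$, realized as $\xi = \lim_n x_n(t_n,j_n)$ with $x_n \in \mathcal{S}(K)$ and $t_n + j_n \to \infty$. Finiteness of $\Sigma$, compactness of $[0,N_0]$, and boundedness of solutions (Lemma \ref{lem0}) allow a subsequence on which every $x_n$ has a common terminal switching value $\sigma$. For each fixed $j \in \mathbb{Z}_{\geq 0}$, global asymptotic stability of $z_\sigma^*$ for $F_\sigma$ supplies, for all sufficiently large $n$, a hybrid time $(s_n,\hat\jmath_n)$ lying past the last jump of $x_n$ and satisfying $s_n + \hat\jmath_n \leq t_n + j_n$, at which the $z$-component enters $\{z_\sigma^*\} + \tfrac{1}{j+1}\Ball$. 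The tail of $x_n$ starting at $(s_n,\hat\jmath_n)$ is itself a solution of $\hat{\mathcal{H}}_0$ with initial condition in $(\{z_\sigma^*\} + \tfrac{1}{j+1}\Ball) \times \{\sigma\} \times [0,N_0]$, so $x_n(t_n,j_n)$ belongs to the reachable set $R_0$ of that set. Taking the limit in $n$ and then intersecting over $j$ gives $\xi \in S_\sigma \subseteq S$.

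For $S \subseteq \Omega_0(K)$, I would fix $\xi \in S_\sigma$ for some $\sigma$, so that for every $j$ there exist an initial condition $(z_j,\sigma,\tau_j)$ with $|z_j - z_\sigma^*| \leq \tfrac{1}{j+1}$ and a solution $\phi_j$ of $\hat{\mathcal{H}}_0$ whose trajectory passes within $\tfrac{1}{j+1}$ of $\xi$ at some hybrid time $(\hat t_j,\hat\jmath_j)$. Using that $K$ contains $\bigl(\bigcup_{\sigma\in\Sigma}\{z_\sigma^*\} \times \{\sigma\}\bigr) \times [0,N_0]$ in its interior, I would construct, for each $j$, a solution of $\hat{\mathcal{H}}_0$ that starts in $K$ at $(z_\sigma^*,\sigma,\tau_j)$, dwells at the equilibrium for an arbitrarily long flow time $T_j$ (the constant equilibrium trajectory being a valid solution because $0 \in F_\sigma(z_\sigma^*)$), and then realizes a near-copy of $\phi_j$ supplied by continuous dependence of hybrid solutions on initial conditions under the hybrid basic conditions. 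This approximating solution lies within distance of order $\tfrac{1}{j+1}$ of $\xi$ at hybrid time $(T_j + \hat t_j,\hat\jmath_j)$; sending $j$ and $T_j$ to infinity delivers $\xi \in \Omega_0(K)$. Assumption \ref{assume:3}, which extends the backward-uniqueness requirement in \cite{baradaranIFAC2020} to differential inclusions, enters precisely to prevent the constructed near-copy of $\phi_j$ from collapsing onto the constant equilibrium trajectory in the limit.

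The main obstacle I expect is the construction in the second inclusion. Continuous dependence of hybrid solutions on initial data holds only over bounded hybrid time, so the error in shadowing $\phi_j$ from $(z_\sigma^*,\sigma,\tau_j)$ must be controlled uniformly in $j$ alongside the approximation quality to $\xi$. Moving from the single-valued vector field setting of \cite{baradaranIFAC2020} to the multi-valued $F_\sigma$ also forces the replacement of ODE forward uniqueness by outer semicontinuous dependence of the hybrid solution map, while Assumption \ref{assume:3} substitutes for the backward-uniqueness ingredient used in the reference.
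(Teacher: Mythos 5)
Your decomposition into the two inclusions $\Omega_0(K)\subseteq S$ and $S\subseteq\Omega_0(K)$ is exactly the paper's route: the paper proves Proposition \ref{prop11} by citing Lemmas \ref{lem1} and \ref{lem2} and defers their proofs to \cite{baradaranIFAC2020} with ``minimal changes.'' However, two steps in your sketch would fail as written.

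First, in the inclusion $\Omega_0(K)\subseteq S$ you locate the entry into $\{z_\sigma^*\}+\tfrac{1}{j+1}\Ball$ ``past the last jump of $x_n$'' using the terminal switching value. If the last jump of $x_n$ occurs at ordinary time close to $t_n$ (nothing prevents this), the solution has essentially no time to approach the terminal equilibrium before $(t_n,j_n)$, and the step collapses. The correct move is to use that $j_n\leq N_0$ and $t_n\to\infty$ force at least one inter-jump interval of length at least $t_n/(N_0+1)$; pass to a subsequence on which the index of that longest interval and its switching value $\sigma'$ are constant, obtain entry into $\{z_{\sigma'}^*\}+\tfrac{1}{j+1}\Ball$ during that interval (uniform attractivity on the compact set furnished by Lagrange stability), and observe that the reachable set $R_0$ tolerates the subsequent jumps, so $x_n(t_n,j_n)\in R_0\bigl(\bigl(\{z_{\sigma'}^*\}+\tfrac{1}{j+1}\Ball\bigr)\times\{\sigma'\}\times[0,N_0]\bigr)$ even though $\sigma'$ need not be the terminal mode.

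Second, and more seriously, in the inclusion $S\subseteq\Omega_0(K)$ you ask for a solution issued from $(z_\sigma^*,\sigma,\tau_j)$ that ``realizes a near-copy of $\phi_j$'' by ``continuous dependence of hybrid solutions on initial conditions.'' For a set-valued $F_\sigma$ the hybrid basic conditions give only \emph{outer} semicontinuity of the solution map (graphical limits of solutions are solutions); they do not give the lower-semicontinuous shadowing you invoke, namely that a prescribed solution from a perturbed initial condition $z_j$ is approximated by some solution from $z_\sigma^*$. Your own closing paragraph names outer semicontinuity as the available tool, but the construction uses it in the wrong direction. The repair is standard and avoids shadowing entirely: if the witnessing times $\hat t_j+\hat\jmath_j$ are unbounded, the solutions $\phi_j$ themselves start in $K$ for large $j$ (since $K_0\subset\mathrm{int}\,K$) and already certify $\xi\in\Omega_0(K)$; if they are bounded, extract a graphically convergent subsequence whose limit $\phi_\infty$ is a single solution from $(z_\sigma^*,\sigma,\tau_\infty)$ reaching $\xi$ at a finite hybrid time, and concatenate the constant dwell of length $T$ at the equilibrium with this one limit solution before letting $T\to\infty$. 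Once the argument is organized this way, no per-$j$ approximation over unbounded hybrid time is needed, which also dissolves the ``main obstacle'' you flag at the end.
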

This result follows from the following Lemmas. 
\begin{lemm}
\label{lem1}
If Assumption \ref{assume:2} holds then,
for each compact set $K \subset \Re^{n+2}$,
$\Omega_{0}(K) \subset S$.
\end{lemm}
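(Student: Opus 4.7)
My plan is to unpack the definition of $\Omega_0(K)$ and show that every $\xi \in \Omega_0(K)$ is reachable from arbitrarily small neighborhoods of $\{z^*_{\bar\sigma}\} \times \{\bar\sigma\} \times [0,N_0]$ for some mode $\bar\sigma \in \Sigma$. First I would take sequences $x_i \in \mathcal{S}(K)$ and $(t_i, j_i) \in \dom(x_i)$ with $t_i + j_i \to \infty$ and $x_i(t_i, j_i) \to \xi$. Since in $\hat{\mathcal{H}}_0$ we have $\dot{\tau} = 0$ while each jump decrements $\tau$ by one and jumps are enabled only when $\tau \geq 1$, every solution has at most $N_0$ jumps. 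Therefore, after passing to a subsequence, $j_i \equiv j^*$ is constant and $t_i \to \infty$. Corollary \ref{cor1} ensures that the $z$-components of all $x_i$ remain in a common compact set $B \subset \Re^n$.

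Next, I would partition the hybrid time domain of each $x_i$ into its $j^*+1$ flow intervals $[T_{i,k}, T_{i,k+1}]$ for $k = 0, \ldots, j^*$, with $T_{i,0} = 0$ and $T_{i,j^*+1} = t_i$. Pigeonhole applied to $\sum_k (T_{i,k+1} - T_{i,k}) = t_i \to \infty$ yields, along a further subsequence, a fixed index $k^* \in \{0, \ldots, j^*\}$ with $T_{i,k^*+1} - T_{i,k^*} \to \infty$. Since the switching signal is constant on each flow interval and $\Sigma$ is finite, one more subsequence extraction produces a single mode $\bar{\sigma} \in \Sigma$ active on the $k^*$-th flow interval of every $x_i$.

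Then I would invoke a $\mathcal{KL}$-estimate for $\dot{z} \in F_{\bar\sigma}(z)$, which follows from Assumption \ref{assume:2} together with GAS of $z^*_{\bar\sigma}$ (cf.\ \cite[Cor.\ 7.7]{Goebel12a}), uniformly over the compact initial set $B$. Setting $R := \sup_{z \in B} |z - z^*_{\bar\sigma}|$, this gives
\[
|z(T_{i,k^*+1}, k^*) - z^*_{\bar\sigma}| \leq \beta\bigl(R,\; T_{i,k^*+1} - T_{i,k^*}\bigr) \longrightarrow 0,
\]
so the state of $x_i$ at the end of its $k^*$-th flow interval converges to a point of $\{z^*_{\bar\sigma}\} \times \{\bar\sigma\} \times [0, N_0]$.

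Finally, I would form the time-shifted tail $y_i(s, l) := x_i(T_{i,k^*+1} + s,\; k^* + l)$, which by time-translation invariance is itself a solution of $\hat{\mathcal{H}}_0$, with initial condition that (for any fixed $j$ and all $i$ sufficiently large) lies in $\bigl(\{z^*_{\bar\sigma}\} + \tfrac{1}{j+1}\Ball\bigr) \times \{\bar\sigma\} \times [0, N_0]$, and which satisfies $y_i(t_i - T_{i,k^*+1},\; j^* - k^*) = x_i(t_i, j^*) \to \xi$. Hence $\xi$ lies in the closure of the corresponding reachable set for every $j$, so $\xi \in S_{\bar\sigma} \subset S$. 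The main obstacle is the bookkeeping step of simultaneously fixing $j^*$, $k^*$, and $\bar\sigma$ by successive extraction of subsequences while preserving $t_i + j_i \to \infty$; once that is in place, Lagrange stability plus the $\mathcal{KL}$-estimate provided by GAS close the argument.
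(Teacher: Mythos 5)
Your argument is correct and is essentially the argument the paper intends: the paper omits the proof of this lemma, deferring to \cite[Section 4.3]{baradaranIFAC2020}, and the proof there proceeds exactly as you do --- the automaton bounds the number of jumps by $N_0$, a pigeonhole argument extracts a flow interval of unbounded length in a fixed mode $\bar\sigma$, the $\mathcal{KL}$-bound from GAS of $z^*_{\bar\sigma}$ (together with Lagrange stability from Corollary \ref{cor1}) drives the state into the $\tfrac{1}{j+1}$-neighborhood of $z^*_{\bar\sigma}$, and the shifted tail solution places $\xi$ in $\overline{R_0\bigl(\bigl(\{z^*_{\bar\sigma}\}+\tfrac{1}{j+1}\Ball\bigr)\times\{\bar\sigma\}\times[0,N_0]\bigr)}$ for every $j$. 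Your successive subsequence extractions and the handling of the case where the long interval is the last one are sound, so no gap remains.
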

\begin{lemm}
\label{lem2}
If Assumptions \ref{assume:2} and \ref{assume:3} hold then,
for each compact set $K \subset \Re^{n+2}$ containing the set 
\begin{align}
K_0 := \left( \bigcup_{\sigma \in \Sigma} \left\{ z^{*}_{\sigma} \right\} \times \left\{ \sigma \right\} \right) \times [0,N_{0}]
\end{align}
in its interior, $S \subset \Omega_{0}(K)$.
\end{lemm}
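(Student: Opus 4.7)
The plan is to take an arbitrary $y^{*} \in S$, which by \eqref{eq:setS} lies in $S_{\sigma}$ for some $\sigma \in \Sigma$, and to construct, for each $T > 0$ and each $\varepsilon > 0$, a solution from $K$ reaching within $\varepsilon$ of $y^{*}$ at a hybrid time $(t,l)$ with $t + l \ge T$. First, I would unpack the definition \eqref{eq:setSS}: since $y^{*} \in \overline{R_{0}(A_{j})}$ for every $j$, with $A_{j} := ( \{ z_{\sigma}^{*} \} + \tfrac{1}{j+1}\Ball ) \times \{ \sigma \} \times [0,N_{0}]$, there exist a sequence of solutions $\psi_{j}$ of $\hat{\mathcal{H}}_{0}$ starting at $\xi_{j} = (\tilde{z}_{j}, \sigma, \tau_{j}^{0})$ with $|\tilde{z}_{j} - z_{\sigma}^{*}| \le 1/(j+1)$ and $\tau_{j}^{0} \in [0,N_{0}]$, together with hybrid times $(s_{j}, l_{j}) \in \dom \psi_{j}$ satisfying $\psi_{j}(s_{j}, l_{j}) \to y^{*}$. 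Because $K_{0} \subset \mathrm{int}(K)$, the starting points $\xi_{j}$ lie in $K$ for all $j$ sufficiently large, so each such $\psi_{j}$ is already a solution from $K$.

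Next I would split into two cases based on whether $s_{j} + l_{j}$ stays bounded. If $s_{j} + l_{j} \to \infty$ along a subsequence, the $\psi_{j}$ themselves directly certify $y^{*} \in \Omega_{0}(K)$, since they lie in $\mathcal{S}(K)$ and reach within $1/j$ of $y^{*}$ at hybrid times tending to infinity. Otherwise $s_{j} + l_{j}$ is bounded along a subsequence, and I would apply compactness of hybrid solutions (using that $\hat{\mathcal{H}}_{0}$ inherits the hybrid basic conditions by setting $\delta = 0$ in the statement following \eqref{eq:Hdelta}, together with Assumption \ref{assume:2}) to pass to a further subsequence that converges graphically to a solution $\psi$ of $\hat{\mathcal{H}}_{0}$ starting at $(z_{\sigma}^{*}, \sigma, \tau_{\infty}^{0})$ for some $\tau_{\infty}^{0} \in [0,N_{0}]$, with $(s_{j}, l_{j}) \to (s_{\infty}, l_{\infty}) \in \dom \psi$ and $\psi(s_{\infty}, l_{\infty}) = y^{*}$.

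The last step promotes this finite-time reaching into arbitrarily late reaching. For each $T > 0$, I would prepend a warm-up of duration $T$ at the equilibrium: start at $(z_{\sigma}^{*}, \sigma, \tau_{\infty}^{0}) \in K_{0} \subset \mathrm{int}(K)$, flow with $\dot{z} = 0 \in F_{\sigma}(z_{\sigma}^{*})$, $\dot{\sigma} = 0$, $\dot{\tau} = 0$ (all valid in $\hat{\mathcal{H}}_{0}$ since $z_{\sigma}^{*}$ is the equilibrium from Assumption \ref{assume:2}), and then concatenate with $\psi$. The resulting $x_{T} \in \mathcal{S}(K)$ satisfies $x_{T}(T + s_{\infty}, l_{\infty}) = y^{*}$ with $T + s_{\infty} + l_{\infty} \ge T$, and taking $T$ arbitrarily large yields $y^{*} \in \Omega_{0}(K)$. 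I expect the main obstacle to be the compactness and graphical-convergence step: extracting a subsequence for which both the solutions $\psi_{j}$ and the hybrid times $(s_{j}, l_{j})$ converge, and confirming that the limit $\psi$ actually attains $y^{*}$ at $(s_{\infty}, l_{\infty})$. Assumption \ref{assume:3} enters precisely here, ruling out pathological behavior near $z_{\sigma}^{*}$: triviality of the reverse inclusion $\dot{z} \in -F_{\sigma}(z)$ from $z_{\sigma}^{*}$ ensures that the graphical limit $\psi$ is not a degenerate artifact at the equilibrium but a genuine forward solution carrying the state from $z_{\sigma}^{*}$ to $y^{*}$.
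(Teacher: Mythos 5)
The paper does not actually supply a proof of this lemma: it states that the argument ``follows the same lines as'' the proof for unconstrained differential equations in \cite[Section 4.3]{baradaranIFAC2020} and omits the details, so there is no line-by-line comparison to make. Judged on its own terms, your reconstruction has the right architecture and the right key moves: unpacking \eqref{eq:setSS} to get approximating solutions $\psi_j$ from the shrinking neighborhoods $A_j$ (which lie in $K$ for large $j$ because $K_0\subset\mathrm{int}(K)$), splitting on whether $s_j+l_j$ is bounded, using Lagrange stability (Corollary \ref{cor1}) plus the hybrid basic conditions to extract a graphically convergent subsequence in the bounded case (note also that $l_j\le N_0$ automatically, since $\tau$ decreases by one at each jump and must stay nonnegative, which keeps the domains under control), and then prepending an arbitrarily long ``warm-up'' at the equilibrium to push the reaching time to infinity. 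This is the natural proof and I see no step that fails.

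The one place you should be more careful is your account of where Assumption \ref{assume:3} enters. It is not needed to certify that the graphical limit $\psi$ is a genuine solution attaining $y^{*}$ at $(s_{\infty},l_{\infty})$: that follows from local eventual boundedness, the hybrid basic conditions, and stability of $z_{\sigma}^{*}$ (which forces the mode-$\sigma$ phase of the limit to sit at $z_{\sigma}^{*}$); ``degenerate artifacts'' are not the issue. Where the assumption genuinely does work in your argument is in the warm-up step itself: you need the constant arc $t\mapsto(z_{\sigma}^{*},\sigma,\tau_{\infty}^{0})$ to be a solution of $\hat{\mathcal{H}}_{0}$ for arbitrarily long horizons, i.e.\ $0\in F_{\sigma}(z_{\sigma}^{*})$ and $z_{\sigma}^{*}\in C$, and global asymptotic stability of a point for a constrained inclusion does not by itself guarantee that a (nontrivial) solution from that point exists. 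Assumption \ref{assume:3}, by asserting that $z(t)\equiv z_{\sigma}^{*}$ is \emph{the} solution of $\dot z\in -F_{\sigma}(z)$ from $z_{\sigma}^{*}$, supplies exactly $0\in F_{\sigma}(z_{\sigma}^{*})$. If you rewrite the last paragraph to place the assumption there (and drop the claim that it is what makes the graphical limit non-degenerate), the argument is complete and self-contained, which is more than the paper itself provides.
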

The proof of Lemmas \ref{lem1} and \ref{lem2} follows the same lines as the proof for unconstrained differential equations given in \cite[Section 4.3]{baradaranIFAC2020}. Since the required changes to the proof are minimal
the details are omitted. 
We can finally state our extension of the main results according to \cite{baradaranIFAC2020}. 
\begin{thm}
\label{thmmain}
Under Assumptions \ref{assume:2} and \ref{assume:3}, the set $S$ defined in (\ref{eq:setS}) is semi-globally, practically asymptotically stable in the parameter $\delta >0$ for the system $\hat{\mathcal{H}}_{\delta}$. 
\end{thm}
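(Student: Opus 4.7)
The plan is two-fold. First, I would upgrade the $\Omega$-limit-set characterization of Proposition \ref{prop11}, together with the Lagrange stability of Corollary \ref{cor1}, into a full (semi-global) asymptotic stability statement for $S$ under the ideal dynamics $\hat{\mathcal{H}}_0$. Second, I would invoke the robustness of asymptotic stability for well-posed hybrid systems under vanishing state and data perturbations to transfer that asymptotic stability into semi-global practical asymptotic stability of $S$ for the perturbed system $\hat{\mathcal{H}}_\delta$ in the parameter $\delta$.

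For the first step, I would fix an arbitrary compact $K \subset \Re^{n+2}$ containing $K_0$ in its interior, noting that such $K$ can be chosen to also contain any prescribed bounded set of initial conditions. By Assumption \ref{assume:2} together with the closedness of $C$ and the purely combinatorial nature of the jump map in (\ref{eq:H0}), $\hat{\mathcal{H}}_0$ satisfies the hybrid basic conditions from \cite[Assumption 6.5]{Goebel12a}. Corollary \ref{cor1} then supplies uniform boundedness of all solutions from $K$, while Proposition \ref{prop11} identifies $\Omega_0(K) = S$, and Lemma \ref{lem0} ensures $S$ is compact. I would then apply the standard hybrid-systems fact (e.g.\ \cite[Corollary 7.7]{Goebel12a}) that, under the hybrid basic conditions, the $\Omega$-limit set of a uniformly bounded family of complete solutions is compact, uniformly attractive, and stable in the sense of Lyapunov. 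This upgrades the characterization $\Omega_0(K) = S$ into asymptotic stability of $S$ for $\hat{\mathcal{H}}_0$ with basin of attraction containing $K$. Since $K$ is arbitrary, this delivers the semi-global strength of attractivity needed downstream.

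For the second step, I would observe that $\hat{\mathcal{H}}_\delta$ differs from $\hat{\mathcal{H}}_0$ only by (i) the $\delta$-inflation of the $z$-flow map, $\overline{\mbox{co}}\,F_\sigma(z+\delta\Ball)+\delta\Ball$ in place of $F_\sigma(z)$, and (ii) the relaxation $\dot{\tau}\in[0,\delta]$ in place of $\dot{\tau}=0$; the jump map, the set $C\times\Sigma\times[0,N_0]$, and the jump set $C\times\Sigma\times[1,N_0]$ are unchanged. Both perturbations in (i)--(ii) vanish in the graphical/outer-semicontinuous sense as $\delta \to 0$, and $\hat{\mathcal{H}}_\delta$ satisfies the hybrid basic conditions uniformly in $\delta$ (as already recorded following (\ref{eq:Hdelta})). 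This is exactly the perturbation template handled by the robustness theorem for hybrid systems, e.g.\ \cite[Theorem 7.21]{Goebel12a}, and it is the construction for which $\hat{\mathcal{H}}_\delta$ was introduced in \cite{Cai2008smooth}. Applying that theorem with the semi-global asymptotic stability of $S$ for $\hat{\mathcal{H}}_0$ established in the first step yields exactly the semi-global practical asymptotic stability of $S$ for $\hat{\mathcal{H}}_\delta$ in the parameter $\delta$, which is the claim.

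The main obstacle I anticipate is in the first step: passing from ``$\Omega_0(K)=S$ and solutions from $K$ are bounded'' to ``$S$ is asymptotically stable (stable \emph{and} attractive) for $\hat{\mathcal{H}}_0$ with basin of attraction $\supseteq K$,'' without constructing a Lyapunov function tailored to $S$. The argument must lean on the minimality and Lyapunov stability inherent in the $\Omega$-limit set of a precompact family of solutions to a well-posed hybrid system; once this is cleanly in place, the robustness-based transfer to SPAS is essentially mechanical and follows the pattern already exploited in \cite{baradaranIFAC2020} for the differential-equation case, with the only modifications being bookkeeping required to handle the set-valued flow map $F_\sigma$ permitted by Assumption \ref{assume:2}.
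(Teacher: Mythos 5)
Your proposal is correct and follows essentially the same route the paper (implicitly, by deferring to \cite{baradaranIFAC2020}) takes: establish global asymptotic stability of $S = \Omega_0(K)$ for the ideal system $\hat{\mathcal{H}}_0$ via the standard $\Omega$-limit-set results for well-posed hybrid systems, then transfer to $\hat{\mathcal{H}}_\delta$ by the robustness/perturbation theorem to obtain semi-global practical asymptotic stability in $\delta$. The only technicality worth making explicit is that $K$ should be enlarged to contain the compact set $S$ itself (not just $K_0$) in its interior before invoking the stability of the $\Omega$-limit set, which Lemma \ref{lem0} and the arbitrariness of $K$ readily permit.
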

\subsection{Verifying that the systems in Section III satisfy the assumptions of Theorem \ref{thmmain}}
\begin{thm}
\label{thm:hihbm_assumptions}
Under Assumptions~\ref{objective_assumptions} and \ref{new_objective_assumptions}, and for $\sigma \in \Sigma$, HiHBM with objective function $\phi_{\sigma}$ satisfies Assumption \ref{assume:3}.
\end{thm}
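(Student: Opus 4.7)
The plan is to exploit the fact that the equilibrium $z^*_\sigma = (q^*_\sigma, 0)$ is the unique point in $C$ where both $p=0$ and $\mathcal{L}\nabla\phi_\sigma(q) = 0$, combined with a Lipschitz-type growth bound on $F_\sigma$ centered at $z^*_\sigma$. This allows a direct Gronwall argument that forces every reverse-time solution starting at $z^*_\sigma$ to coincide with the constant solution.

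First, I would verify that $-F_\sigma(z^*_\sigma) = \{0\}$. Assumption~\ref{new_objective_assumptions} gives a unique minimizer $q^*_\sigma$, and the KKT characterization \eqref{eq:kkt} then yields $\mathcal{L}\nabla\phi_\sigma(q^*_\sigma)=0$. Together with the second entry $p=0$ of $z^*_\sigma$, every selection of $F_\sigma$ at $z^*_\sigma$ collapses to the zero vector, so $z(t)\equiv z^*_\sigma$ is indeed a solution of $\dot z\in -F_\sigma(z)$.

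Next, I would establish a linear growth bound: there exists $C>0$ such that $\|f\|\leq C\,\|z-z^*_\sigma\|$ for every $z\in C$ and every $f\in F_\sigma(z)$. The first block of $f$ is simply $p$, whose norm is bounded by $\|z-z^*_\sigma\|$. The second block takes the form $-\kappa p - \mathcal{L}\nabla\phi_\sigma(q)$ for some $\kappa\in[\underline K,\overline K]$; using $\kappa\leq\overline K$, together with the rewriting $\mathcal{L}\nabla\phi_\sigma(q) = \mathcal{L}\bigl(\nabla\phi_\sigma(q)-\nabla\phi_\sigma(q^*_\sigma)\bigr)$ enabled by the equilibrium identity, Assumption~\ref{objective_assumptions} (Lipschitz gradient with constant $L$) yields a bound linear in $\|q-q^*_\sigma\|$. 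Combining the two blocks produces a global constant, e.g.\ $C=\sqrt{2}\max(1+\overline K,\,\|\mathcal{L}\|L)$.

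Finally, for any solution $z(\cdot)$ of $\dot z\in -F_\sigma(z)$ with $z(0)=z^*_\sigma$, absolute continuity combined with the growth bound gives $\|\dot z(t)\|\leq C\,\|z(t)-z^*_\sigma\|$ almost everywhere. Setting $\psi(t):=\|z(t)-z^*_\sigma\|$ yields the integral inequality $\psi(t)\leq \int_0^t C\,\psi(s)\,ds$ with $\psi(0)=0$, so Gronwall's inequality forces $\psi\equiv 0$ and hence $z(t)=z^*_\sigma$ for all $t\geq 0$. The main obstacle is essentially bookkeeping rather than conceptual: the bound must hold \emph{uniformly} over every admissible selection of the set-valued $\kappa$, but this is handled cleanly by using the uniform upper bound $\overline K$, reducing the set-valued case to the standard ODE Gronwall argument without further modification.
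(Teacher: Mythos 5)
Your proposal is correct and follows essentially the same route as the paper: a linear growth bound on $F_\sigma$ centered at $z^*_\sigma$ (where $F_\sigma(z^*_\sigma)=\{0\}$ by the KKT conditions) followed by a Gronwall/comparison argument forcing the reverse-time solution from $z^*_\sigma$ to remain constant. The only differences are cosmetic — you make the equilibrium identity and the uniformity over selections of $\kappa$ explicit and use the integral form of Gronwall, whereas the paper invokes a local Lipschitz bound and the differential inequality $\langle e,\dot e\rangle \leq L|e|^2$.
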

\begin{proof} 
Under the given assumptions, for each constant $\sigma$, let $z^{*}_{\sigma} \coloneqq (q^{*}_{\sigma}, 0)$, where $q^{*}_{\sigma}$ is the unique solution of (\ref{optss}) be the initial value to $\dot{z} \in -F_{\sigma}(z)$. Let $z$ be also a solution to (\ref{sup}). There exist $L \geq 0$ and $r>0$ such that $|F_{\sigma}(z) - F_{\sigma}(z_{\sigma}^{*})| \leq L|z - z_{\sigma}^{*}|$ for all $z \in \left\{ z_{\sigma}^{*} \right\} + r \Ball$. Then, with $e:=z-z_{\sigma}^{*}$ and noting that $e(0)=0$ and $\langle e, \dot{e} \rangle \leq L |e|^{2}$, it follows from standard comparison theorems that $e(t)=0$ for all $t \geq 0$.
\end{proof}
\vspace{-9pt}
\begin{thm}
\label{thm:hihbm_assu}
Under Assumptions~\ref{objective_assumptions} and \ref{new_objective_assumptions}, and for $\sigma \in \Sigma$, HiHBM with objective function $\phi_{\sigma}$ satisfies Assumption \ref{assume:2}.
\end{thm}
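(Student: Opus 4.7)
The plan is to verify, for each fixed $\sigma\in\Sigma$, the three requirements of Assumption \ref{assume:2}: outer semi-continuity and local boundedness of $F_\sigma$ relative to $C$; nonemptiness and convexity of the values of $F_\sigma$; and global asymptotic stability of $z^*_\sigma$ for the constrained inclusion $\dot z\in F_\sigma(z)$, $z\in C$.

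First, I would handle the regularity of $F_\sigma$. The first block of $F_\sigma(z)$ is the singleton $\{p\}$ and poses no difficulty. The second block is $-\kappa(z)p-\mathcal{L}\nabla\phi_\sigma(q)$, where $\kappa$ is the Filippov-style set-valued map from (\ref{eq:switch}). On the two open regions where $\langle\mathcal{L}\nabla\phi_\sigma(q),p\rangle$ has constant sign, $\kappa$ is a scalar constant; on the switching surface it takes the value $[\underline K,\overline K]$. A direct check of its graph shows $\kappa$ is closed-graph, hence outer semi-continuous, with nonempty convex values uniformly bounded in $[\underline K,\overline K]$. Under Assumption \ref{new_objective_assumptions}, $\mathcal{L}\nabla\phi_\sigma$ is continuous on $C$. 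Scalar multiplication of $p$ by $\kappa(z)$ preserves these set-valued properties, and adding the continuous single-valued term $-\mathcal{L}\nabla\phi_\sigma(q)$ keeps $F_\sigma$ outer semi-continuous, nonempty-valued, convex-valued, and locally bounded relative to $C$.

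The next step is to establish global asymptotic stability of $z^*_\sigma=(q^*_\sigma,0)$. Under Assumption \ref{new_objective_assumptions}, each $\phi_\sigma$ satisfies Assumption \ref{objective_assumptions}, so Theorem \ref{thm:gas_hihbm} applies verbatim with $\phi$ replaced by $\phi_\sigma$, yielding global asymptotic stability of the compact set $\mathcal{Q}^*_\sigma\times\{0\}$ for the HiHBM dynamics restricted to $C$. Assumption \ref{new_objective_assumptions} additionally asserts uniqueness of the minimizer, so $\mathcal{Q}^*_\sigma=\{q^*_\sigma\}$, which collapses the attractor to the single point $z^*_\sigma$ and delivers exactly the stability requirement of Assumption \ref{assume:2}.

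The main obstacle is essentially notational bookkeeping around the set-valued $\kappa$: one must argue carefully that it produces a regular differential inclusion (outer semi-continuous with nonempty, convex, locally bounded values), and that the algebraic operations in $F_\sigma$ preserve these properties. Once that is done, the stability conclusion follows at no additional cost from Theorem \ref{thm:gas_hihbm} and the uniqueness hypothesis in Assumption \ref{new_objective_assumptions}, so no new Lyapunov analysis is required.
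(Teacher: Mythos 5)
Your proposal is correct and its core argument---obtaining global asymptotic stability of $z^*_\sigma=(q^*_\sigma,0)$ by applying Theorem~\ref{thm:gas_hihbm} with $\phi=\phi_\sigma$ and using the uniqueness of the minimizer from Assumption~\ref{new_objective_assumptions} to collapse $\mathcal{Q}^*_\sigma\times\{0\}$ to a singleton---is exactly the paper's proof. Your additional verification of outer semi-continuity, local boundedness, and nonempty convex values of $F_\sigma$ via the closed graph and interval values of $\kappa$ is sound and actually fills in regularity checks that the paper's one-line proof leaves implicit.
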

\begin{proof}
For each constant $\sigma \in \Sigma$, let $q_{\sigma}^*$ denote the solution to problem \eqref{optss}. Then, global asymptotic stability of $z_{\sigma}^* \coloneqq (q_{\sigma}^*, 0)$ for HiHBM with objective $\phi_{\sigma}$ follows from Theorem~\ref{thm:gas_hihbm} by setting $\mathcal{Q} = \{q_{\sigma}^*\} \times \{0\}$ for each $\sigma$. 
\end{proof}
\begin{pr2} 
\normalfont We now prove Theorem \ref{thm1}. Suppose Assumptions \ref{objective_assumptions} and \ref{new_objective_assumptions} hold. It follows from Theorems \ref{thm:hihbm_assumptions} and \ref{thm:hihbm_assu} that Assumptions \ref{assume:2} and \ref{assume:3} hold. From Theorem \ref{thmmain}, it follows that the set $S$ in (\ref{eq:setS}) is semi-globally practically asymptotically stable for the system (\ref{eq:Hdelta})-(\ref{eq:problem_unswitched}) embedded with an average dwell-time automaton as in (\ref{eq:Hdelta}) with respect to the parameter $\delta$. Finally, according to Proposition \ref{prop11}, the set $S$ is the $\Omega$-limit set indicated in Theorem \ref{thm1}.
\end{pr2}
\vspace{-8pt}
\section{Conclusions}
We discussed the importance of enabling persistent switches of objective functions during an online execution of an optimization algorithm. The reason for these persistent switches is motivated by many engineering applications of convex optimization. We further extend the existing results of \cite{baradaranIFAC2020} on differential equations with multiple equilibria to differential inclusions with distinct multiple equilibria. We present the asset switches occurring in an optimization problem consisting of drones in a relay aiming for maximizing the signal strength. The asset switches are analyzed through the use of average dwell-time parameter which determines the rate of objective's switching during the online HiHBM optimization. Hence, we establish semi-global practical asymptotic stability of a certain set with respect to this parameter. We characterize this set via Omega-limit sets. 
\bibliographystyle{IEEEtran}
\bibliography{ACCbiblio}    
\end{document}